\theoremstyle{plain}
\newtheorem{thm}{Theorem}[section]
\newtheorem{proposition}{Proposition}[section]
\newtheorem*{main theorem}{Theorem}
\theoremstyle{definition}
\newtheorem*{definition}{Definition}
\begin{document}

\title{On continuous causal isomorphisms}
\author{Do-Hyung Kim}
\address{Department of Mathematics, College of Natural Science, Dankook University,
San 29, Anseo-dong, Dongnam-gu, Cheonan-si, Chungnam, 330-714,
Republic of Korea} \email{mathph@dankook.ac.kr}

\keywords{causal isomorphism, causal relation, wave equation,
Zeeman theorem}

\begin{abstract}
It is shown that continuous causal isomorphisms on two-dimensional
Minkowski spacetime can be characterized by the invariance of wave
equations.
\end{abstract}

\maketitle

\section{Introduction} \label{section:1}

In 1964, Zeeman has shown that general form of causal isomorphism
$F$ defined on $\mathbb{R}^n_1$, when $n \geq 3$, has the form
$F(\mathbf{x}) = a A \mathbf{x} + \mathbf{b}$, where $a$ is a
positive real number and $A$ is an orthochronous
matrix.(\cite{Zeeman}) As Zeeman remarked, his result does not
hold when $n=2$. In 2010, the general form of causal isomorphism
defined on $\mathbb{R}^2_1$ was clarified in \cite{CQG3} and
\cite{CQG4}. When these results were obtained, the results
strongly suggested that there are some kind of close relationship
between causal isomorphism and wave equation and thus, in
\cite{Low}, Low officially proposed the question how to
characterize causal isomorphisms on $\mathbb{R}^2_1$ in terms of
wave equations. In \cite{Wave}, it is shown that, when $n \geq 3$,
causal isomorphisms on $\mathbb{R}^n_1$ can be characterized by
invariance of wave equations and in \cite{Kim}, even if $n=2$,
causal isomorphisms can be characterized by the invariance of wave
equations. However, one of characteristic differences between
causal isomorphisms on $\mathbb{R}^n_1$ ($n \geq 3$) and
$\mathbb{R}^2_1$ is that causal isomorphisms on $\mathbb{R}^2_1$
are not necessarily differentiable whereas they are $C^\infty$
when $n \geq 3$. Since causal isomorphisms are necessarily smooth
when $n \geq 3$, in \cite{Wave}, we don't need to worry about the
smoothness of causal isomorphisms. In contrast, in \cite{Kim}, it
must be explicitly assumed that the causal isomorphisms are at
least $C^2$ to ensure that we can apply wave operators since there
are non-differentiable causal isomorphisms. In other words, it
only remains that how we can characterize $C^0$ and $C^1$ causal
isomorphisms on $\mathbb{R}^2_1$ in terms of wave equations.

In this paper, it is shown that when $n=2$, even continuous causal
isomorphisms can be characterized by the invariance of wave
equations. In order to take derivatives of $C^0$ functions, we
need to generalize the derivatives and to this end, it is
necessary to introduce generalized functions and generalized
derivatives.

\section{An Overview on distributions} \label{section:2}

 To take derivatives of functions which are not differentiable
 in the classical sense, we need to generalize the notion of
 functions and derivatives. In this section, we introduce
 distributions(or, generalized functions) and their derivatives,
 and briefly review their basics.

Ordinarily, a real-valued function $f$ is given by specifying its
value $f(x)$ to each point $x$ in the domain of definition.
However, $f$ can be defined in another equivalent ways. For
example, let $f$ be a continuous function on $\mathbb{R}^2$ and
assume that we know the value of $\int_{\mathbb{R}^2} f(x,y)
\phi(x,y) dx dy$ for each $C^\infty$ function $\phi$ whose support
is compact. Then, by use of convolution, we can obtain the value
$f(x,y)$ for each $(x,y) \in \mathbb{R}^2$.(For details, see
Theorem 7.7 in \cite{Folland}). Therefore, the value of
$\int_{\mathbb{R}^2} f(x,y) \phi(x,y) dx dy$ for each $C^\infty$
function $\phi$ whose support is compact, can be used to define a
function and this gives us a way to generalize the concept of
ordinary functions. Hence, we proceed as follows.

\begin{definition}
 By test functions on $\mathbb{R}^2$, we mean $C^\infty$ functions
 defined on $\mathbb{R}^2$
with compact supports and we denote the set of all test functions
by $\mathcal{D}$. A distribution or a generalized function on
$\mathbb{R}^2$ is a mapping $F : \mathcal{D} \rightarrow
\mathbb{R}$ such that $F$ is $\mathbb{R}$-linear and satisfies the
following  continuity condition. : Let $\phi_k \in \mathcal{D}$
and supp $\phi_k$ be contained in a fixed bounded set for all $k$.
If $\phi_k$ and all their derivatives converge uniformly to zero,
then $F(\phi_k)$ converges to zero.
\end{definition}

  Motivated by the above, from now on, we
 identify locally integrable function $f$ defined on $\mathbb{R}^2$
 with the distribution $\phi \mapsto \int f \phi d\mu$ where $d\mu = dx dy$ is
 the Lebesgue measure. Since we are interested only in continuous functions in
 this paper, and
 continuous functions are locally integrable, we identify
 continuous function $f$ with the distribution $\phi \mapsto \int f \phi
 d\mu$ and vice versa. Therefore, when we say that $f = g$ as
 distributions, it means that $\int f \phi d\mu = \int g \phi
 d\mu$ for all $\phi \in \mathcal{D}$.

We now generalize the notion of derivatives of ordinary functions
to get derivatives of distributions. If $f$ is $C^1$, integration
by parts gives us $\int_{\mathbb{R}^2} \frac{\partial f}{\partial
x}
 \phi \,dx dy =\int_\mathbb{R} \Big( \Big[ f(x,y) \phi(x,y) \Big]_{x=-\infty}^{x=\infty} - \int_\mathbb{R} f \frac{\partial \phi}{\partial x}
\, dx \Big) dy$. Since test function $\phi$ has a compact support,
we have $\phi(-\infty,y) = \phi(\infty,y) =0$ and thus
 $\int_{\mathbb{R}^2} \frac{\partial f}{\partial x}
 \phi \, dx dy = - \int_{\mathbb{R}^2} f \frac{d \phi}{dx}
\, dx dy$. Motivated by this fact, we define the derivatives of
distributions as the following.

 \begin{definition}
 If $F$ is a distribution
on $\mathbb{R}^2$, we define its partial derivatives by
$\frac{\partial F}{\partial x} \big(\phi \big) = -F\big(
\frac{\partial \phi}{\partial x} \big)$ and $\frac{\partial
F}{\partial y} \big(\phi \big) = -F\big( \frac{\partial
\phi}{\partial y} \big)$.
\end{definition}

 From the definition of distribution, it is not difficult to see
 that the derivatives of a distribution are also distributions, and it
  must be noted that any distribution can be differentiated infinitely
many times since test functions are differentiable infinitely many
times. For example, if $f$ is a continuous function defined on
$\mathbb{R}^2$, then its partial derivative $\frac{\partial
f}{\partial x}$ is the distribution $\phi \mapsto -
\int_{\mathbb{R}^2} f(x,y) \frac{\partial \phi}{\partial x} \, dx
dy$. We also remark that for any distribution $F$, we have
$\frac{\partial^2 F}{\partial x \partial y} = \frac{\partial^2
F}{\partial y \partial x}$. To compare the derivative of a
distribution with the derivative in the classical sense, we have
the following.

\begin{thm}
Let $f$ be a real-valued function defined on $\mathbb{R}$, which
is differentiable in the classical sense. Then, its derivative
$f^\prime$ in the classical sense is the same as the derivative in
the distribution sense.
\end{thm}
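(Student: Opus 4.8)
The plan is to unwind the definitions and reduce the statement to the one-variable integration-by-parts identity, working with the evident one-dimensional analogues of the two Definitions above. Writing $F$ for the distribution $\phi \mapsto \int_{\mathbb{R}} f \phi \, dx$ determined by $f$, the derivative of $F$ in the distribution sense is $\frac{dF}{dx}(\phi) = -F(\phi') = -\int_{\mathbb{R}} f \phi' \, dx$, whereas the distribution attached to the classical derivative $f'$ is $\phi \mapsto \int_{\mathbb{R}} f' \phi \, dx$. Thus the theorem is equivalent to the assertion
\[
\int_{\mathbb{R}} f'(x)\,\phi(x)\,dx = -\int_{\mathbb{R}} f(x)\,\phi'(x)\,dx
\]
for every test function $\phi$.

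To prove this I would fix $\phi$ and choose an interval $[a,b]$ whose interior contains $\mathrm{supp}\,\phi$, so that every integral reduces to one over $[a,b]$ and $\phi(a) = \phi(b) = 0$. The clean way to proceed is to consider the product $g = f\phi$, which is differentiable in the classical sense with $g' = f'\phi + f\phi'$ and which vanishes at $a$ and $b$. The fundamental theorem of calculus applied to $g$ then gives $\int_a^b g'\,dx = g(b) - g(a) = 0$, that is, $\int_a^b (f'\phi + f\phi')\,dx = 0$, and rearranging this yields exactly the identity displayed above.

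The step that needs care, and which I expect to be the main obstacle, is the appeal to the fundamental theorem of calculus, since the hypothesis only guarantees that $f$ is differentiable pointwise, so neither $f'$ nor $g'$ need be continuous or Riemann integrable. I would dispose of this as follows: the term $f\phi'$ is continuous, hence integrable, so integrability of $g'$ reduces to local integrability of $f'$, and under that condition the form of the fundamental theorem valid for everywhere-differentiable functions with integrable derivative applies. For the functions actually arising in this paper $f$ is moreover of class $C^1$, in which case one may bypass the subtlety entirely and invoke the classical integration-by-parts formula $\int_a^b f'\phi\,dx = [f\phi]_a^b - \int_a^b f\phi'\,dx$, the boundary term $[f\phi]_a^b$ vanishing because $\phi(a) = \phi(b) = 0$.
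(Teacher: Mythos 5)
Your proposal is correct, and its first half---reducing the theorem to the identity $\int_{\mathbb{R}} f'\phi\,dx = -\int_{\mathbb{R}} f\phi'\,dx$ for every test function $\phi$---is exactly what the paper does; the paper's entire proof consists of this reduction followed by the words ``by computation'', which tacitly invoke the integration-by-parts formula derived in Section 2 for $C^1$ functions. Where you genuinely depart from the paper is in how you justify that identity. Instead of integrating by parts directly, you apply the fundamental theorem of calculus to the product $g = f\phi$, using $g' = f'\phi + f\phi'$ and the vanishing of $g$ at the endpoints; and, more importantly, you notice that the theorem's hypothesis is only pointwise differentiability, under which $f'$ need not be continuous or even locally bounded, so the classical formulation of integration by parts does not apply as stated. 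Your fix---observing that $f\phi'$ is continuous, so integrability of $g'$ amounts to local integrability of $f'$ (which is needed anyway for ``$f'$ as a distribution'' to make sense), and then invoking the version of the fundamental theorem valid for everywhere-differentiable functions with Lebesgue-integrable derivative---is sound and proves the theorem under its stated hypothesis, which the paper's one-line computation does not. In short: same skeleton, but your argument is more careful and strictly more general; the paper's proof is only complete if one silently strengthens the hypothesis to $f \in C^1$, which, as you note, is all the rest of the paper actually requires.
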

\begin{proof}
Let $\hat{f}$ denote the derivative of $f$ in the distribution
sense. To show that $f^\prime = \hat{f}$, it is sufficient to show
that they produce the same value when multiplied by test functions
and integrated. By computation, we have $\int_{\mathbb{R}} f^
\prime \phi dx = - \int_{\mathbb{R}} f \phi^\prime dx$ which is
the same as $\hat{f}\big(\phi\big) = - \int_{\mathbb{R}} f
\phi^\prime dx $ by definition.
\end{proof}

 We finally remark that by the same argument, we can show that the above theorem
 also holds for multi-variable functions and for their partial
 derivatives. For example, though $f(x,y) = \sqrt{x^2+y^2}$ does not
have partial derivatives at $(0,0)$ in the classical sense, we can
see that
 its distribution
 $\frac{\partial f}{\partial x}$ is a map $\phi \mapsto -\int
 \sqrt{x^2+y^2} \frac{\partial \phi}{\partial x} \, dxdy$ and it
 is not difficult to see that $\frac{\partial^2 f}{\partial x
 \partial y} = \frac{\partial^2 f}{\partial y \partial x}$ as distributions.
   This is the way in which we generalize derivatives.

\section{Continuous causal isomorphisms} \label{section:3}

From now on, all functions are considered as identified with the
corresponding distributions and derivatives are considered in the
distribution sense.

\begin{proposition} \label{partial}
Let $z=f(x,y)$ where $f$ is locally integrable. Then we have
$\frac{\partial f}{\partial x} = 0$ if and only if $z = f(y)$.
\end{proposition}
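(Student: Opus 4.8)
The plan is to prove the two implications separately, with the forward direction being routine and the converse carrying the real content. Throughout I unwind the definitions: by the identifications fixed in Section~\ref{section:2}, the condition $\frac{\partial f}{\partial x}=0$ as a distribution means precisely that $\int_{\mathbb{R}^2} f\,\frac{\partial \phi}{\partial x}\,d\mu = 0$ for every $\phi \in \mathcal{D}$. For the easy direction, suppose $f$ depends only on $y$, say $z=f(y)$. Then $\int_{\mathbb{R}^2} f(y)\frac{\partial\phi}{\partial x}\,dx\,dy = \int_{\mathbb{R}} f(y)\big(\int_{\mathbb{R}}\frac{\partial\phi}{\partial x}\,dx\big)\,dy$, and since each $\phi \in \mathcal{D}$ has compact support the inner integral vanishes by the fundamental theorem of calculus. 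Hence $\frac{\partial f}{\partial x}=0$.

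For the converse, the key observation I would establish first is a characterization of those test functions that arise as $x$-derivatives: a function $\psi \in \mathcal{D}$ equals $\frac{\partial \phi}{\partial x}$ for some $\phi \in \mathcal{D}$ if and only if $\int_{\mathbb{R}}\psi(x,y)\,dx = 0$ for every $y$. The nontrivial implication is produced by setting $\phi(x,y) = \int_{-\infty}^x \psi(t,y)\,dt$: the vanishing of the total $x$-integral forces $\phi$ to have compact support in $x$, since $\phi$ is $0$ to the left of $\operatorname{supp}\psi$ and returns to $0$ to the right; compact support in $y$ is inherited from $\psi$; and joint smoothness follows from differentiation under the integral sign. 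Consequently the hypothesis $\frac{\partial f}{\partial x}=0$ says exactly that $f$ annihilates every test function whose $x$-integral is identically zero, because $f\big(\frac{\partial \phi}{\partial x}\big) = -\frac{\partial f}{\partial x}(\phi) = 0$.

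The remaining step is to reduce an arbitrary test function to this situation. I would fix once and for all a function $\rho \in C_c^\infty(\mathbb{R})$ with $\int_{\mathbb{R}}\rho\,dx = 1$. Given any $\eta\in\mathcal{D}$, set $\chi_\eta(y) = \int_{\mathbb{R}}\eta(t,y)\,dt$, which is a test function in the single variable $y$, and write $\eta(x,y) = \rho(x)\,\chi_\eta(y) + r(x,y)$. A direct computation gives $\int_{\mathbb{R}} r(x,y)\,dx = \chi_\eta(y) - \chi_\eta(y)\int_{\mathbb{R}}\rho\,dx = 0$ for each $y$, so the remainder $r$ is annihilated by $f$, whence $f(\eta) = f\big(\rho(x)\chi_\eta(y)\big)$. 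Defining the distribution $h$ in the variable $y$ by $h(\chi) = f\big(\rho(x)\chi(y)\big)$, this reads $f(\eta) = h\big(\int_{\mathbb{R}}\eta(x,\cdot)\,dx\big)$, which is precisely the assertion that $f$ is independent of $x$, i.e. $z=f(y)$; in the continuous setting of this paper $h$ is moreover an ordinary continuous function of $y$.

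I expect the main obstacle to be the first part of the converse, namely verifying that the antiderivative $\phi(x,y)=\int_{-\infty}^x\psi(t,y)\,dt$ genuinely lies in $\mathcal{D}$ — in particular that it is smooth jointly in $(x,y)$ and compactly supported in both variables. Once this characterization of $x$-derivatives is in hand, everything that follows is bookkeeping built on the splitting $\eta = \rho\,\chi_\eta + r$.
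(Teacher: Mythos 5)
Your proof is correct and takes essentially the same approach as the paper: your decomposition $\eta = \rho(x)\chi_\eta(y) + r$ is exactly the paper's $\psi(x,y) = \phi(x,y) - \phi_0(x)\int_{-\infty}^{\infty}\phi(u,y)\,du$, and your antiderivative $\int_{-\infty}^{x}\psi(t,y)\,dt$ is precisely the paper's test function $\eta_\phi$. The only differences are presentational: you isolate the criterion for a test function to be an exact $x$-derivative as a standalone lemma and phrase the conclusion through the one-variable distribution $h$, whereas the paper identifies $f$ directly, via Fubini, with the locally integrable function $y \mapsto \int_{-\infty}^{\infty} f(u,y)\phi_0(u)\,du$.
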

\begin{proof}

Assume that $\frac{\partial f}{\partial x} = 0$ and choose a
$C^\infty$ function $\phi_0 : \mathbb{R} \rightarrow \mathbb{R}$
such that $\mbox{supp}\,\, \phi_0$ is compact and
$\int_{-\infty}^{\infty} \phi_0 dx = 1$. For $\phi \in
\mathcal{D}$, let $\psi(x,y) = \phi(x,y) - \phi_0(x)
\int_{-\infty}^{\infty} \phi(u,y) du$. Then, it is easy to see
that $\psi \in \mathcal{D}$. If we let $\eta_\phi(x,y) =
\int_{-\infty}^{x} \phi(u,y) du - \Big(\int_{-\infty}^{\infty}
\phi(u,y) du \Big) \int_{-\infty}^{x} \phi_0(u) du$, then we have
$\frac{\partial \eta_{\phi}}{\partial x}(x,y) = \psi(x,y)$ and
$\eta_{\phi} \in \mathcal{D}$ since $\int_{-\infty}^{\infty}
\phi_0 dx =1$.

Since $\frac{\partial f}{\partial x} = 0$, we have $0 = \int
\frac{\partial f}{\partial x} \eta_{\phi} d\mu = -\int f
\frac{\partial \eta_{\phi}}{\partial x} d\mu = -\int f \psi d\mu$.
If we substitute $\psi(x,y) = \phi(x,y) - \phi_0(x)
\int_{-\infty}^{\infty} \phi(u,y) du$ into $\int f \psi d\mu = 0$,
we have
\begin{eqnarray*}
\int_{\mathbb{R}^2} f \phi \,\, d\mu &=& \int_{\mathbb{R}^2}
f(x,y)
\phi_0(x) \Big( \int_{-\infty}^{\infty} \phi(u,y) du \Big) d\mu \\
&=& \int_{-\infty}^{\infty} \int_{-\infty}^{\infty}
\int_{-\infty}^{\infty} f(x,y) \phi_0(x) \phi(u,y) \,\, du dx dy \\
&=& \int_{-\infty}^{\infty} \int_{-\infty}^{\infty}
\int_{-\infty}^{\infty} f(u,y) \phi_0(u) \phi(x,y) \,\, du dx dy \\
&=& \int_{-\infty}^{\infty} \int_{-\infty}^{\infty} \Big[
\int_{-\infty}^{\infty} f(u,y) \phi_0(u) du \Big] \phi(x,y) \,\,
dx dy.
\end{eqnarray*}
Therefore, we have $f = \int_{-\infty}^{\infty} f(u,y) \phi_0(u)
\,\, du$, which is a function of $y$.

Conversely, we now assume that $z = f(y)$. Then, we have

\begin{eqnarray*}
\frac{\partial f}{\partial x} &=& -f\Big[\frac{\partial
\phi}{\partial
x}\Big]\\
&=& - \int \int f \frac{\partial \phi}{\partial x} \,\, dx dy \\
&=& - \int_{-\infty}^{\infty} f(y) \Big( \int_{-\infty}^{\infty}
\frac{\partial \phi}{\partial x} \,\, dx \Big) dy\\
&=& - \int_{-\infty}^{\infty} f(y) \cdot 0 \,\, dy = 0.
\end{eqnarray*}
In the last equality, we have used the fact that $\phi$ has a
compact support.

\end{proof}

\begin{thm}
Let $f$ and $g$ be locally integrable functions defined on
$\mathbb{R}^2$. If $\frac{\partial f}{\partial x} = g$, then there
exist distributions $h$ and $G$ such that $f = h + G$, where
$\frac{\partial h}{\partial x} = 0$ and $\frac{\partial
G}{\partial x} = g$.
\end{thm}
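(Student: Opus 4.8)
The plan is to recognize that the real content of the statement is the explicit construction of an antiderivative $G$ of $g$ in the $x$-variable; once such a $G$ is in hand, the companion term is forced to be $h := f - G$, and the two derivative conditions follow immediately. So the entire burden of the argument rests on producing a locally integrable $G$ with $\frac{\partial G}{\partial x} = g$.

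First I would define $G$ by partial integration in $x$, setting $G(x,y) = \int_0^x g(t,y)\, dt$. Before using it I must check that this is a legitimate locally integrable function. Since $g$ is locally integrable on $\mathbb{R}^2$, Fubini's theorem guarantees that for almost every $y$ the slice $t \mapsto g(t,y)$ is locally integrable, so the inner integral exists for a.e.\ $y$ and every $x$; a second application of Fubini-Tonelli over compact rectangles then shows that $G$ itself is locally integrable, hence defines a distribution in the sense fixed in Section~\ref{section:2}.

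Next comes the heart of the matter: verifying $\frac{\partial G}{\partial x} = g$ as distributions. For $\phi \in \mathcal{D}$, the definition gives $\frac{\partial G}{\partial x}(\phi) = -\int_{\mathbb{R}^2} G \, \frac{\partial \phi}{\partial x} \, d\mu$. I would substitute the integral expression for $G$, use Fubini to perform the $x$-integration first at a fixed $y$, and then integrate by parts in $x$. Because $x \mapsto G(x,y)$ is absolutely continuous with derivative $g(\cdot,y)$ for a.e.\ $y$, the fundamental theorem of calculus yields $\int_{\mathbb{R}} G(x,y)\,\frac{\partial \phi}{\partial x}(x,y)\,dx = -\int_{\mathbb{R}} g(x,y)\,\phi(x,y)\,dx$, the boundary terms vanishing since $\phi$ has compact support. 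Reassembling and applying Fubini once more delivers $\frac{\partial G}{\partial x}(\phi) = \int_{\mathbb{R}^2} g\,\phi \, d\mu = g(\phi)$, which is precisely $\frac{\partial G}{\partial x} = g$.

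Finally I would set $h := f - G$, again a locally integrable distribution, so that $f = h + G$ holds by construction and $\frac{\partial h}{\partial x} = \frac{\partial f}{\partial x} - \frac{\partial G}{\partial x} = g - g = 0$; by Proposition~\ref{partial} this $h$ is in fact a function of $y$ alone, which is slightly more than the statement asks. The step I expect to be the main obstacle is the distributional differentiation of $G$: the interchange of the order of integration and the integration by parts must be justified for a merely locally integrable $g$ rather than a continuous one, so the argument should lean on the absolute continuity of the partial primitive $x \mapsto G(x,y)$ and on Fubini's theorem rather than on naive manipulation of the integrals.
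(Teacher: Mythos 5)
Your proof is correct, but it runs in the opposite direction from the paper's. You construct $G$ first and explicitly, as the partial primitive $G(x,y)=\int_0^x g(t,y)\,dt$, verify $\frac{\partial G}{\partial x}=g$ by Fubini plus integration by parts along almost every horizontal slice (using absolute continuity of $x\mapsto G(x,y)$), and only then define $h:=f-G$, invoking the hypothesis $\frac{\partial f}{\partial x}=g$ at the very last step. The paper instead constructs $h$ first and explicitly, as the average $h(y)=\int_{-\infty}^{\infty}f(u,y)\phi_0(u)\,du$ (a function of $y$ alone, so $\frac{\partial h}{\partial x}=0$ by Proposition \ref{partial}), and defines $G$ abstractly as the functional $\phi\mapsto-\int g\,\eta_\phi\,d\mu$, reusing the $\psi$, $\eta_\phi$ machinery of Proposition \ref{partial}; the hypothesis enters at the start to produce the identity $(*)$ showing $G=f-h$, and $\frac{\partial G}{\partial x}=g$ follows from the observation that $\eta_{\phi_x}=\phi$ because $\int_{-\infty}^{\infty}\frac{\partial\phi}{\partial u}\,du=0$. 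Each route has its merits: yours is more concrete and delivers slightly more than the statement asks --- both $h$ and $G$ are genuine locally integrable functions, with $h$ depending on $y$ alone --- but it carries the measure-theoretic burden (almost-every-slice integrability, joint measurability of the primitive, integration by parts for absolutely continuous functions) that you correctly flag as the delicate point. The paper's route stays entirely inside the distributional calculus it has already set up, never needs the fundamental theorem of calculus for Lebesgue integrals, and uses the same template that drives Proposition \ref{partial} and Theorem \ref{main}, which is why the paper prefers it even though its $G$ is, on its face, only a distribution.
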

\begin{proof}

Choose a smooth $\phi_0 : \mathbb{R} \rightarrow \mathbb{R}$ such
that $\int_{-\infty}^{\infty} \phi_0 \,\, dx = 1$ and supp
$\phi_0$ is compact. If we define $\psi$ and $\eta_{\phi}$ in the
same manner as in the Proposition \ref{partial}, then, we have
$\psi \in \mathcal{D}$, $\eta_{\phi} \in \mathcal{D}$ and
$\frac{\partial \eta_{\phi}}{\partial x}(x,y) = \psi(x,y)$.

Since $\frac{\partial f}{\partial x} = g$, we have

$$\int f \frac{\partial \phi}{\partial x} \,\, d\mu = -\int g \phi
\,\, d\mu, \,\,\,\, \mbox{for all} \,\,\,\, \phi \in
\mathcal{D}.$$ In particular, we have
\begin{eqnarray*}
 \int f \frac{\partial \eta_{\phi}}{\partial x} \,\, d\mu &=&
-\int g \eta_{\phi} \,\,
d\mu\\
 \mbox{and so} \,\,\, \int f \psi \,\, d\mu &=& -\int g \eta_{\phi}
\,\, d\mu.
\end{eqnarray*}

If we substitute $\psi(x,y) = \phi(x,y) - \phi_0(x)
\int_{-\infty}^{\infty} \phi(u,y) du$ into the above equation, we
have
$$ \int_{\mathbb{R}^2} f \phi \,\, d\mu - \int_{\mathbb{R}^2}
f(x,y) \phi_0(x) \Big( \int_{-\infty}^{\infty} \phi(u,y) \,\, du
\Big) \,\, d\mu = -\int g \eta_{\phi} \,\, d\mu \,\, \cdots \,\,
(*).
$$

If let $h\big(\phi\big) = \int_{\mathbb{R}^2} \Big(
\int_{-\infty}^{\infty} f(u,y) \phi_0(u) \,\, du \Big) \phi(x,y)
\,\, d\mu$, which is the same as the second term on the left hand
side of $(*)$, then it is easy to see that $h$ is a distribution
and, since $h = \int_{-\infty}^{\infty} f(u,y) \phi_0(u) \,\, du$
as a function, we have $\frac{\partial h}{\partial x} = 0$ by
Proposition \ref{partial}.

If we consider $G : \phi \mapsto - \int g \eta_{\phi} \,\, d\mu$,
then, from $(*)$, $G = f - h$ and so $G$ is a distribution.

To show $\frac{\partial G}{\partial x} = g$, since
$\int_{-\infty}^{\infty} \frac{\partial \phi}{\partial u}(u,y)
\,\, du = 0$, we have

\begin{eqnarray*}
\frac{\partial G}{\partial x}\big( \phi \big) &=& -G\big(
\frac{\partial \phi}{\partial
x} \big)\\
&=& \int g \Big[ \int_{-\infty}^{x} \frac{\partial \phi}{\partial
u}(u,y) \,\, du - \Big( \int_{-\infty}^{\infty} \frac{\partial
\phi}{\partial u}(u,y) \,\, du \Big) \int_{-\infty}^{x} \phi_0(u)
\,\, du \Big] \,\, d\mu \\
&=& \int g(x,y) \phi(x,y) \,\, d\mu, \,\, \mbox{since} \,\, \phi
\,\, \mbox{has compact support.}
\end{eqnarray*}

Therefore, we have $\frac{\partial G}{\partial x} = g$.

\end{proof}

We now prove one of the key theorems which characterize continuous
causal isomorphisms.

\begin{thm} \label{main}
Let $f=f(x,y)$ be a locally integrable function. Then,
$\frac{\partial^2 f}{\partial x \partial y} = 0$ if and only if
$f(x,y) = \alpha(x) + \beta(y)$, where $\alpha$ and $\beta$ are
locally integrable functions. Furthermore, if $f$ is continuous,
then $\alpha$ and $\beta$ are continuous.
\end{thm}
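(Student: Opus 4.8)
The plan is to obtain both implications from the same averaging device used in Proposition \ref{partial}, now iterated in the two variables. The converse is immediate: if $f(x,y)=\alpha(x)+\beta(y)$ with $\alpha,\beta$ locally integrable, then $\alpha$ is a function of $x$ alone and $\beta$ a function of $y$ alone, so Proposition \ref{partial} gives $\partial\alpha/\partial y=0$ and $\partial\beta/\partial x=0$. Since mixed distributional derivatives commute, $\partial^2 f/\partial x\partial y=\partial_x(\partial_y\alpha)+\partial_y(\partial_x\beta)=0$.

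For the forward direction I would first fix smooth compactly supported $\phi_0,\phi_1:\mathbb{R}\to\mathbb{R}$ with $\int\phi_0=\int\phi_1=1$ and define the candidate pieces by averaging $f$ against them: $\beta(y)=\int f(u,y)\phi_0(u)\,du$, $\tilde\alpha(x)=\int f(x,v)\phi_1(v)\,dv$, and the constant $c=\iint f(u,v)\phi_0(u)\phi_1(v)\,du\,dv$, setting $\alpha=\tilde\alpha-c$. The goal is to prove the distributional identity $f=\alpha+\beta$, equivalently $\int f\psi\,d\mu=0$ for all $\phi\in\mathcal{D}$, where
\[
\psi(x,y)=\phi(x,y)-\phi_0(x)\!\int\!\phi(u,y)\,du-\phi_1(y)\!\int\!\phi(x,v)\,dv+\phi_0(x)\phi_1(y)\!\iint\!\phi(u,v)\,du\,dv .
\]
Indeed, expanding $\int f\psi\,d\mu$ and using Tonelli to move the inner integrals onto $f$ turns the four terms into $\int f\phi\,d\mu$, $\int\beta\phi\,d\mu$, $\int\tilde\alpha\phi\,d\mu$ and $c\int\phi\,d\mu$, so that $\int f\psi\,d\mu=0$ reads exactly $f=\tilde\alpha+\beta-c=\alpha+\beta$.

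To produce $\int f\psi\,d\mu=0$ I would construct a single test function $\Theta_\phi\in\mathcal{D}$ with $\partial^2\Theta_\phi/\partial x\partial y=\psi$; then the hypothesis gives $0=\frac{\partial^2 f}{\partial x\partial y}(\Theta_\phi)=f\big(\partial^2\Theta_\phi/\partial x\partial y\big)=\int f\psi\,d\mu$. This is where the construction of $\eta_\phi$ from Proposition \ref{partial} gets used twice: one checks directly that $\int\psi(x,y)\,dx=0$ and $\int\psi(x,y)\,dy=0$ for every fixed value of the other variable, whereupon $\xi(x,y):=\int_{-\infty}^{x}\psi(s,y)\,ds$ satisfies $\partial\xi/\partial x=\psi$, and $\Theta_\phi(x,y):=\int_{-\infty}^{y}\xi(x,w)\,dw$ satisfies $\partial\Theta_\phi/\partial y=\xi$, hence $\partial^2\Theta_\phi/\partial x\partial y=\psi$. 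The main obstacle is verifying that $\xi$ and $\Theta_\phi$ are genuinely test functions, i.e.\ have compact support; this is precisely what the two vanishing-integral identities guarantee (the first forces $\xi\to 0$ as $x\to+\infty$, the second forces $\Theta_\phi\to 0$ as $y\to+\infty$), exactly mirroring why $\eta_\phi\in\mathcal{D}$ in Proposition \ref{partial}.

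It remains to record the regularity of $\alpha$ and $\beta$. Local integrability is routine from Tonelli: on any compact set in $y$ one has $\int|\beta(y)|\,dy\le\iint|f(u,y)|\,|\phi_0(u)|\,du\,dy<\infty$ because $f$ is locally integrable and $\phi_0$ has compact support, and similarly for $\tilde\alpha$; as $c$ is a constant, $\alpha$ is locally integrable as well. Finally, if $f$ is continuous then for $y$ near any $y_0$ the integrand $f(u,y)\phi_0(u)$ is dominated by a constant multiple of $|\phi_0|$ on the compact support of $\phi_0$, so dominated convergence makes $\beta$ continuous, and likewise $\tilde\alpha$ and hence $\alpha$; since continuous functions that agree as distributions agree pointwise, this yields the continuous representatives required.
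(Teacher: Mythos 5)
Your proof is correct and follows essentially the same strategy as the paper: the same correction function $\psi$ built from mollifier averages, a test function $\Theta_\phi$ (the paper's $\eta_\phi$) obtained as an iterated antiderivative of $\psi$ whose compact support follows from the vanishing of $\int\psi\,dx$ and $\int\psi\,dy$, and the same reading-off of $\alpha$ and $\beta$ as averages of $f$, with the converse handled via Proposition \ref{partial} and commutativity of mixed distributional derivatives. The only cosmetic differences are that you allow two distinct one-dimensional mollifiers $\phi_0,\phi_1$ where the paper uses a single $\phi_0$ and sets $\phi_1(x,y)=-\phi_0(x)\phi_0(y)$, and that you present $\Theta_\phi$ as an antiderivative of $\psi$ rather than writing out its explicit four-term formula.
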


\begin{proof}

Assume that $\frac{\partial^2 f}{\partial x \partial y} = 0$ and
choose a $C^\infty$ function $\phi_0(x)$ such that
$\int_{-\infty}^{\infty} \phi_0 dx =1$ and the support of $\phi_0$
is compact. Let $\phi_1(x,y) = -\phi_0(x)\phi_0(y)$. Then,
$\phi_1$ is $C^\infty$, has compact support, and we have
$\int_{-\infty}^\infty \phi_1(u,y) du = -\phi_0(y)$ and
$\int_{-\infty}^\infty \phi_1(x,u) du = -\phi_0(x)$. Given any
test function $\phi$, define $\psi$ and $\eta_\phi$ by
$$ \psi(x,y) = \phi(x,y) - \phi_0(x)\int_{-\infty}^{\infty}
\phi(u,y) du - \phi_0(y) \int_{-\infty}^{\infty} \phi(x,v) dv -
\phi_1(x,y) \Big(\int \phi d\mu\Big)$$ and

\begin{eqnarray*}
 \eta_\phi(x,y) &=& \int_{-\infty}^x \int_{-\infty}^y \phi(u,v) dv
du - \int_{-\infty}^x \phi_0(u) du \int_{-\infty}^{\infty}
\int_{-\infty}^y \phi(u,v) dv du\\
& & - \int_{-\infty}^y \phi_0(u) du
\int_{-\infty}^{\infty}\int_{-\infty}^x \phi(u,v) du dv -
\int_{-\infty}^x \int_{-\infty}^y \phi_1(u,v) du dv \Big(\int
\phi d\mu \Big)
\end{eqnarray*}

 In the definition of $\eta_\phi$, we can see that, for large $x$,
 the first and second terms cancel out, and the third and fourth
 terms cancel out. Also, for large $y$, the first and third terms
 cancel out and the second and fourth terms cancel out. Therefore,
 $\eta_\phi$ has a compact support. Now it is easy to see that $\psi$ and $\eta_\phi$ are
 test functions and $\frac{\partial^2 \eta_\phi}{\partial x
 \partial y} = \psi(x,y)$. Since $f_{xy} = \frac{\partial^2
 f}{\partial x \partial y} = 0$, we have $0 = \int \int f_{xy}
 \phi(x,y) dx dy = \int \int f(x,y) \phi_{xy} dx dy$ for any test
 functions $\phi(x,y)$. Therefore, we have $0 = \int \int f(x,y)
 \frac{\partial^2 \eta_\phi}{\partial x \partial y} dx dy = \int
 \int f(x,y) \psi(x,y) dx dy$.

 If we substitute the definition of $\psi$ into $\int f \psi d\mu = 0$, we have

\begin{eqnarray*}
\int\int f \phi dx dy &=& \int \int f(x,y) \phi_0(x) \Big(
\int_{-\infty}^{\infty} \phi(u,y) du \Big) dx dy\\
& & + \int \int f(x,y) \phi_0(y) \Big(\int_{-\infty}^{\infty}
\phi(x,v) dv \Big)
dx dy\\
& & + \int \int f(x,y) \phi_1(x,y) dx dy \Big( \int \phi d\mu \Big) \\
&=&
\int_{-\infty}^{\infty}\int_{-\infty}^{\infty}\int_{-\infty}^{\infty}
f(x,y) \phi_0(x) \phi(u,y) du dx dy \\
& & +
\int_{-\infty}^{\infty}\int_{-\infty}^{\infty}\int_{-\infty}^{\infty}
f(x,y)\phi_0(y)\phi(x,v) dv dx dy \\
& & + \Big( \int_{-\infty}^{\infty}\int_{-\infty}^{\infty} f(x,y)
\phi_1(x,y) dx dy \Big)
\Big(\int_{-\infty}^{\infty}\int_{-\infty}^{\infty} \phi dx dy
\Big)\\
&=& \int_{-\infty}^{\infty}\int_{-\infty}^{\infty}
\Big[\int_{-\infty}^{\infty}f(u,y) \phi_0(u) du \Big] \phi(x,y) dx
dy \\
& & + \int_{-\infty}^{\infty}\int_{-\infty}^{\infty}
\Big[\int_{-\infty}^{\infty} f(x,v) \phi_0(v) dv \Big] \phi(x,y)
dx dy \\
& & + \int_{-\infty}^{\infty}\int_{-\infty}^{\infty} c \phi(x,y)
dx dy,\\
 \,\,\, \mbox{where}\,\,\, c &=&
\int_{-\infty}^{\infty}\int_{-\infty}^{\infty} f(x,y) \phi_1(x,y)
dx dy.
\end{eqnarray*}

In other words, $f(x,y) = \alpha(x) + \beta(y)$ where $\alpha(x) =
\int_{-\infty}^{\infty}f(x,v) \phi_0(v) dv$ and $\beta(y) =
\int_{-\infty}^{\infty} f(u,y) \phi_0(u) du + c$. Since $f$ is
locally integrable, from the definition of $\alpha$ and $\beta$ we
can see that $\alpha$ and $\beta$ are locally integrable. Since
any continuous map defined on compact set is uniformly continuous,
if $f$ is continuous, then $\alpha$ and $\beta$ are continuous.

 Since $f_{xy} = f_{yx}$ for any distribution $f$, the converse is
 easily obtained from Proposition \ref{partial}.

\end{proof}

\section{The characterization} \label{section:4}

 On two-dimensional Minkowski spacetime $\mathbb{R}^2_1$, if we
 use null coordinates $u = x+t$ and $v = x-t$, then we have the
 following.

\begin{thm}
Let $(U,V) = F(u,v)$ be a causal isomorphism on $\mathbb{R}^2_1$.
Then, there exist unique homeomorphisms $\varphi$ and $\psi$ on
$\mathbb{R}$, which are both increasing or both decreasing such
that if $\phi$ and $\psi$ are increasing, then we have $F(u,v) =
\big( \varphi(u), \psi(v) \big)$, or if $\varphi$ and $\psi$ are
decreasing, then we have $F(u,v) = \big( \varphi(v), \psi(u)
\big)$.

Conversely, for any given homeomorphisms $\varphi$ and $\psi$ on
$\mathbb{R}$, which are both increasing or both decreasing, the
function $F$ defined as above is a causal isomorphism on
$\mathbb{R}^2_1$.
\end{thm}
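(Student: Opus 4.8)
The plan is to strip the problem down to a purely order-theoretic classification of the automorphisms of a product order, and then to carry out that classification directly. First I would record what the causal relation looks like in the null coordinates. A one-line computation with $t=(u-v)/2$, $x=(u+v)/2$ shows that for points $p=(u_p,v_p)$ and $q=(u_q,v_q)$ one has $q$ in the causal future of $p$, written $p\preceq q$, exactly when $u_p\le u_q$ and $v_q\le v_p$, and that the two families of null lines are precisely $\{u=\text{const}\}$ and $\{v=\text{const}\}$. Thus the causal order is the product of $(\mathbb{R},\le)$ in the $u$–slot with $(\mathbb{R},\ge)$ in the $v$–slot, and after the harmless substitution $w=-v$ it becomes the standard coordinatewise order on $\mathbb{R}^2$. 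Since a causal isomorphism is by definition a bijection $F$ with $p\preceq q\iff F(p)\preceq F(q)$, the theorem is equivalent to classifying the order automorphisms of $(\mathbb{R}^2,\le)$ carrying this product order.

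Next I would recover the null lines from the order alone. The key observation is that two comparable points $p\prec q$ lie on a common null line if and only if the order interval $[p,q]=\{r:p\preceq r\preceq q\}$ is totally ordered: a genuine coordinate rectangle contains its two incomparable corners, whereas a horizontal or vertical segment is a chain. Being defined purely by $\preceq$, this property is preserved by $F$. A short combinatorial check then shows that three distinct points which are pairwise null-related must be collinear on one null line, so the image of an entire null line is contained in a single null line; applying the same argument to $F^{-1}$ upgrades this to a bijection of null lines onto null lines.

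The main obstacle is to show that $F$ globally either preserves the two null families or globally interchanges them, rather than switching behavior from region to region. Here I would argue as follows: through each point $p$ there pass exactly one vertical and one horizontal null line, meeting only at $p$; their images are two distinct null lines meeting only at $F(p)$, hence are the vertical and horizontal null lines through $F(p)$, in one of the two possible matchings. This gives a local ``swap/no-swap'' alternative at every point. Because an entire null line has a single image null line, the alternative is constant along every vertical and every horizontal line, and since any two points of $\mathbb{R}^2$ are joined by a vertical segment followed by a horizontal one, connectedness forces the choice to be globally constant.

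Once the global choice is fixed, say $F$ sends vertical lines to vertical lines, every vertical line $\{u=c\}$ maps to a vertical line and every horizontal line to a horizontal line, so $F(u,w)=(\varphi(u),\psi(w))$ for maps $\varphi,\psi$ of $\mathbb{R}$ that are induced on the two coordinates. These are order-preserving bijections of $\mathbb{R}$, and a monotone bijection of $\mathbb{R}$ is automatically a homeomorphism, which yields the homeomorphism claim with no extra continuity hypothesis. Translating back through $w=-v$ turns the no-swap case into $F(u,v)=(\varphi(u),\psi(v))$ with $\varphi,\psi$ increasing and the swap case into $F(u,v)=(\varphi(v),\psi(u))$ with $\varphi,\psi$ decreasing, exactly the two asserted forms; uniqueness of $\varphi,\psi$ is immediate since they are read off from the action of $F$ on the null lines. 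Finally, for the converse I would simply verify directly that each displayed form satisfies $p\preceq q\iff F(p)\preceq F(q)$ in both directions, so that it is indeed a causal isomorphism.
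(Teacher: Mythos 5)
Your proposal is correct, but the comparison here is lopsided: the paper does not actually prove this theorem at all — its ``proof'' consists of the single line ``See Theorem 2.2 in \cite{Kim}.'' So what you have written is a genuinely different contribution, namely a self-contained, purely order-theoretic proof that this paper (and arguably a reader of it) lacks. Your reduction is sound at every step: in null coordinates the causal relation is indeed $p\preceq q\iff u_p\le u_q,\ v_q\le v_p$, so after $w=-v$ the problem becomes classifying order automorphisms of the product order on $\mathbb{R}^2$; the intrinsic characterization of null lines (the interval $[p,q]$ is a chain iff $p,q$ share a coordinate, since otherwise it contains the two incomparable corners) is exactly right; the collinearity of pairwise null-related triples gives that null lines map into null lines, and the $F^{-1}$ argument upgrades this to a bijection of null lines; the swap/no-swap state at a point is determined by the type of the image of either line through it, hence is constant along null lines, hence globally constant by the vertical-then-horizontal connectedness argument; and the final observation that a monotone bijection of $\mathbb{R}$ is automatically a homeomorphism correctly delivers the continuity claim with no added hypothesis (this is worth emphasizing, since it is what makes the theorem apply to arbitrary, not a priori continuous, causal isomorphisms). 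The bookkeeping of the two cases under $w=-v$ (no-swap $\mapsto$ both increasing, swap $\mapsto$ both decreasing) also checks out, as does the routine converse. What your approach buys is independence from the external reference and minimal machinery — nothing beyond the definition of the causal order is used; what the paper's citation buys is brevity and consistency with the lineage \cite{CQG3}, \cite{CQG4}, \cite{Kim}, where this classification was first established and on which the rest of the paper's narrative relies. If you wanted to tighten your write-up into a full proof, the only places needing a few extra written lines are the ``short combinatorial check'' for collinearity of null-related triples and the verification that the induced coordinate maps $\varphi,\psi$ are order-preserving bijections, both of which are straightforward.
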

\begin{proof}
See Theorem 2.2 in \cite{Kim}.
\end{proof}

 We now begin to characterize continuous causal isomorphisms by
 use of results obtained in the previous section.

\begin{thm}
Let $(\sigma, \tau) = F(u,v) $ be a homeomorphism from
$\mathbb{R}^2_1$ onto $\mathbb{R}^2_1$ where $(u,v)$ and $(\sigma,
\tau)$ are null coordinates. Suppose that, for any function
$\theta$ on $\mathbb{R}^2_1$, $\theta_{uv}=0$ if and only if
$\theta_{\sigma \tau}=0$. Then, there exist homeomorphisms
$\varphi$ and $\psi$ on $\mathbb{R}$ such that, either $F(u,v) =
\big( \varphi(u), \psi(v) \big)$ or $F(u,v) = \big( \varphi(v),
\psi(u) \big)$.
\end{thm}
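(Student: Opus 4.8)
The plan is to read off the additive structure of each coordinate of $F$ from the hypothesis together with Theorem \ref{main}, and then to upgrade that additive structure to genuine dependence on a single null variable by feeding \emph{nonlinear} functions of the target coordinates through the equivalence $\theta_{uv}=0 \Leftrightarrow \theta_{\sigma\tau}=0$. For the first step, fix any continuous $g:\mathbb{R}\to\mathbb{R}$ and consider the function $\theta=g(\sigma)$ on $\mathbb{R}^2_1$. In the target null coordinates $\theta$ depends on $\sigma$ alone, so $\theta_{\sigma\tau}=0$; by the hypothesis $\theta_{uv}=0$, and since $g\circ\sigma$ is continuous, Theorem \ref{main} yields $g(\sigma(u,v))=P_g(u)+Q_g(v)$ with $P_g,Q_g$ continuous. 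Taking $g=\mathrm{id}$ gives $\sigma(u,v)=\alpha(u)+\beta(v)$, and the same argument applied to $\tau$ gives $\tau(u,v)=\gamma(u)+\delta(v)$, all four functions continuous.

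The crux, and the step I expect to be the main obstacle, is to pass from this additive form to true one-variable dependence: the mere splitting $\sigma=\alpha(u)+\beta(v)$ is far too weak (it is satisfied, e.g., by $u+v$), so one must genuinely exploit that the splitting holds for \emph{every} continuous $g$, not just $g=\mathrm{id}$. I would argue by contradiction: suppose both $\alpha$ and $\beta$ are non-constant, so $\mathrm{Im}(\alpha)$ is a nondegenerate interval and $\beta$ takes two distinct values $b_1\neq b_2$. The relation $g(\alpha(u)+b_1)-g(\alpha(u)+b_2)=Q_g(v_1)-Q_g(v_2)$ forces $g(a+b_1)-g(a+b_2)$ to be constant in $a$ over $\mathrm{Im}(\alpha)$, for every continuous $g$. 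Choosing $a_0\neq a_0'$ in $\mathrm{Im}(\alpha)$ with $|a_0-a_0'|\neq|b_2-b_1|$ makes the four points $a_0+b_1,\,a_0+b_2,\,a_0'+b_1,\,a_0'+b_2$ distinct, and a continuous bump $g$ equal to $1$ at $a_0+b_1$ and to $0$ at the other three points makes the difference equal to $1$ at $a_0$ but $0$ at $a_0'$, contradicting constancy. Hence one of $\alpha,\beta$ is constant, so $\sigma$ depends on a single variable; the identical argument applied to $\tau$ shows $\tau$ depends on a single variable.

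It remains to assemble the product form and check the factors are homeomorphisms. Since $F$ is onto $\mathbb{R}^2$, the coordinates $\sigma$ and $\tau$ are each surjective onto $\mathbb{R}$, hence non-constant, so each depends on exactly one of $u,v$. They cannot depend on the same variable, for then $F$ would be constant in the other variable, contradicting injectivity. Therefore either $\sigma=\varphi(u),\ \tau=\psi(v)$, giving $F(u,v)=\big(\varphi(u),\psi(v)\big)$, or $\sigma=\varphi(v),\ \tau=\psi(u)$, giving $F(u,v)=\big(\varphi(v),\psi(u)\big)$. Finally, the homeomorphism property of $F$ passes to the factors: surjectivity and injectivity of $F$ force $\varphi,\psi$ to be bijective (e.g.\ $\varphi(u_1)=\varphi(u_2)$ gives $F(u_1,v)=F(u_2,v)$, hence $u_1=u_2$), while continuity of $F$ and of $F^{-1}$ gives continuity of $\varphi,\psi$ and of their inverses. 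Thus $\varphi$ and $\psi$ are homeomorphisms of $\mathbb{R}$, which completes the argument.
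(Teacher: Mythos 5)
Your proof is correct, and its overall skeleton matches the paper's: apply Theorem \ref{main} to the coordinate functions $\sigma$ and $\tau$ to obtain the additive splittings $\sigma = \alpha(u)+\beta(v)$, $\tau = \gamma(u)+\delta(v)$, show that one summand in each splitting must be constant, and then use bijectivity of $F$ to assemble the product form and conclude that the factors are homeomorphisms. The difference is in how the crux step (constancy of one summand) is executed. The paper feeds the single nonlinear function $\theta = \sigma^2$ through the hypothesis: since $\sigma^2 = \alpha(u)^2 + 2\alpha(u)\beta(v) + \beta(v)^2$ and the pure terms satisfy $\frac{\partial^2}{\partial u \partial v}\big(\alpha(u)^2+\beta(v)^2\big)=0$ by the converse direction of Theorem \ref{main}, linearity gives $\frac{\partial^2}{\partial u \partial v}\big(\alpha(u)\beta(v)\big)=0$, hence $\alpha(u)\beta(v) = f(u)+g(v)$; subtracting this identity at $u_1, u_2$ with $\alpha(u_1)\neq\alpha(u_2)$ shows at once that $\beta$ is constant. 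You instead quantify over \emph{all} continuous $g$, obtaining $g(\alpha(u)+\beta(v)) = P_g(u)+Q_g(v)$, and rule out the case that both $\alpha$ and $\beta$ are non-constant by a bump function supported near one of four distinct points; your choice of $a_0, a_0'$ with $0<|a_0-a_0'|\neq |b_1-b_2|$ correctly guarantees the four points are distinct. Both arguments are valid. The paper's is more economical -- it invokes the hypothesis only for the four functions $\sigma, \tau, \sigma^2, \tau^2$, and the multiplicative splitting collapses by pure algebra -- whereas yours consumes the hypothesis for a whole family of test compositions $g\circ\sigma$ but avoids the squaring trick entirely, which makes it arguably more robust in settings where no convenient algebraic identity is available. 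One small point you (like the paper) use silently and could state explicitly: the equalities produced by Theorem \ref{main} hold a priori in the distributional sense, and are upgraded to pointwise identities because both sides are continuous; this is needed when you evaluate the splitting at the particular values $v_1, v_2$.
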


\begin{proof}
If we let $\theta(\sigma, \tau)=\sigma$, then $\theta$ satisfies
$\theta_{\sigma \tau}=0$ and so, by assumption, we have
$\theta_{uv}=0$. Then, by Theorem \ref{main}, there exist
continuous functions $\alpha$ and $\beta$ such that $\sigma(u,v) =
\alpha(u) + \beta(v)$. Likewise, by considering $\theta(\sigma,
\tau) = \tau$, we can get continuous functions $\gamma$ and
$\delta$ such that $\tau(u,v) = \gamma(u) + \delta(v)$. From
 Proposition \ref{partial}, we can see that $\theta(\sigma, \tau) =
\sigma^2$ satisfies $\theta_{\sigma \tau} = 0$ and thus, we have
$\frac{\partial \sigma^2}{\partial u \partial v} = 0$. Since
$\sigma(u,v)^2 = \alpha(u)^2 + 2\alpha(u)\beta(v) + \beta(v)^2$,
and $\frac{\partial^2}{\partial u \partial v} \Big( \alpha(u)^2 +
\beta(v)^2 \Big) = 0$ by Theorem \ref{main}, we have
$\frac{\partial^2}{\partial u
\partial v} \Big( \alpha(u) \beta(v) \Big) = 0$. By Theorem
\ref{main} again, we have $\alpha(u) \beta(v) = f(u) + g(v)$,
where $f$ and $g$ are continuous. If we assume that $\alpha$ is
not constant, there exist $u_1$ and $u_2$ such that $\alpha(u_1)
\neq \alpha(u_2)$ and so we have
\begin{eqnarray*}
\alpha(u_1) \beta(v) &=& f(u_1) + g(v) \,\,\, \mbox{and} \\
 \alpha(u_2) \beta(v) &=& f(u_2) + g(v).
\end{eqnarray*}

Therefore, we have $\beta(v) = \frac{f(u_1)-f(u_2)}{\alpha(u_1) -
\alpha(u_2)}$. In other words, if $\alpha$ is not constant, then
$\beta$ is a constant function. Since $(\sigma, \tau) = F(u,v)$ is
a bijection, if $\alpha$ is a constant function, then $\beta$ is
not a constant function.

By considering $\theta(\sigma, \tau) = \tau^2$, by the same
argument as above, we can show that $\gamma$ is not a constant
function if and only if $\delta$ is a constant function.

Assume first that $\alpha$ is not a constant function. Then, by
the previous observation, we have $\sigma = \varphi(u)$ where
$\varphi(u) = \alpha(u) + \, \mbox{constant}$ is a continuous
function. If $\gamma(u)$ is not a constant function, then $\tau =
\gamma(u) + c$ and thus $F$ is not a bijection, which is a
contradiction. Therefore, $\gamma$ is a constant function and thus
we have $\tau = \psi(v)$ where $\psi(v) = \delta(v) + \,
\mbox{constant}$ is a continuous function. Since $F$ is bijective,
$\varphi$ and $\psi$ must be bijective and by the topological
domain of invariance, continuous bijections $\varphi$ and $\psi$
are homeomorphisms on $\mathbb{R}$.

We now assume that $\alpha$ is a constant function. Then, since
$\beta$ is not a constant function, we have $\sigma = \varphi(v)$
where $\varphi(v) = \beta(v) + \, \mbox{constant}$ is a continuous
function and, by the same argument as the above, we can show that
$\tau = \psi(u)$ where $\psi(u) = \gamma(u) + \, \mbox{constant}$
is a continuous function. By the same argument as the above,
$\varphi$ and $\psi$ are homeomorphisms on $\mathbb{R}$.
\end{proof}

By combining the above two theorems, we have the following.

\begin{thm}
Let $F : \mathbb{R}^2_1 \rightarrow \mathbb{R}^2_1$ be a
homeomorphism given by $(\sigma, \tau) = F(u,v)$ where $(u,v)$ and
$(\sigma, \tau)$ are null coordinates. Then, a necessary and
sufficient condition for $F$ to be a causal isomorphism is that,
for any continuous function $\theta$ on $\mathbb{R}^2_1$, we have
$\theta_{uv} = 0$ if and only if $\theta_{\sigma \tau} = 0$, and
either (1) $\sigma$ and $\tau$ are increasing functions of $u$ and
$v$, respectively, or (2) $\sigma$ and $\tau$ are decreasing
functions of $v$ and $u$, respectively.
\end{thm}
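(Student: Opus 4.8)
The plan is to prove both directions by combining the two preceding theorems, which already do most of the heavy lifting.

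For the forward direction, suppose $F$ is a causal isomorphism. I would first invoke the structure theorem (the cited Theorem 2.2 from \cite{Kim}) to conclude that $F$ has one of the two product forms $F(u,v)=(\varphi(u),\psi(v))$ with $\varphi,\psi$ both increasing, or $F(u,v)=(\varphi(v),\psi(u))$ with both decreasing. In the first case $\sigma=\varphi(u)$ depends only on $u$ and $\tau=\psi(v)$ depends only on $v$, so by Proposition \ref{partial} we get $\sigma_v=0$ and $\tau_u=0$, whence $\sigma_{uv}=\tau_{uv}=0$; monotonicity gives condition (1). The second case is symmetric and yields condition (2). It then remains to verify the invariance $\theta_{uv}=0 \Leftrightarrow \theta_{\sigma\tau}=0$ for all continuous $\theta$. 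Here I would use Theorem \ref{main}: $\theta_{\sigma\tau}=0$ iff $\theta=a(\sigma)+b(\tau)$ for continuous $a,b$, and since $\sigma,\tau$ are each functions of a single null coordinate, the composite $a(\sigma)+b(\tau)$ is again a sum of a function of $u$ and a function of $v$, so $\theta_{uv}=0$ by Theme \ref{main} applied in the $(u,v)$ coordinates. The reverse implication is identical upon swapping the roles of the two coordinate systems.

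For the converse, assume the stated conditions hold. The invariance hypothesis $\theta_{uv}=0\Leftrightarrow\theta_{\sigma\tau}=0$ is exactly the hypothesis of the second theorem in Section \ref{section:4}, so that theorem delivers homeomorphisms $\varphi,\psi$ with $F(u,v)=(\varphi(u),\psi(v))$ or $F(u,v)=(\varphi(v),\psi(u))$. At this point $F$ is a product of two one-dimensional homeomorphisms, but I still need to pin down the monotonicity so as to apply the structure theorem's converse and conclude that $F$ is a causal isomorphism. This is where conditions (1) and (2) enter: condition (1) forces the first (diagonal) form with $\varphi,\psi$ both increasing, and condition (2) forces the second (anti-diagonal) form with both decreasing. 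Feeding these into the converse half of the structure theorem yields that $F$ is a causal isomorphism.

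The main obstacle I anticipate is the bookkeeping that matches the monotonicity conditions (1)/(2) against the two product forms produced by the second theorem, because that theorem on its own does not fix which coordinate $\varphi$ and $\psi$ depend on nor whether they increase or decrease—one must rule out the mixed possibilities (e.g. $\sigma$ increasing in $u$ but $\tau$ decreasing in $v$) using bijectivity of $F$ together with continuity, much as was done inside the proof of that theorem. Once the form and monotonicity are consistently aligned, the appeal to the structure theorem is immediate. I would take care to state explicitly that the invariance condition is what guarantees the product structure, while conditions (1)/(2) are what guarantee the correct orientation, so that neither hypothesis is redundant.
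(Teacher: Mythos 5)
Your proposal is correct and follows essentially the same route as the paper, which offers no detailed argument of its own beyond the remark that the result follows ``by combining the above two theorems'': you combine the structure theorem cited from \cite{Kim} with the wave-invariance theorem of Section \ref{section:4}, using Theorem \ref{main} and Proposition \ref{partial} to transfer the condition $\theta_{uv}=0 \Leftrightarrow \theta_{\sigma\tau}=0$ across the product form. Your explicit attention to matching the monotonicity conditions (1)/(2) with the two product forms is exactly the bookkeeping the paper leaves implicit.
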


Since $\mathbb{R}^2_1$ is strongly causal, any causal isomorphism
on $\mathbb{R}^2_1$ is a homeomorphism and thus, this theorem
characterizes all of the causal isomorphisms on $\mathbb{R}^2_1$
by invariance of wave equations, regardless of their smoothness.
With this theorem and the results in \cite{Wave} and \cite{Kim},
 the characterization of causal isomorphisms in terms of wave equations is
 completed regardless of their smoothness and their dimensions.

\section{Acknowledgement}

The present research was conducted by the research fund of Dankook
University in 2014.

\end{document}